\newcommand{\PSPACE}{\mathrm{PSPACE}}
\newcommand{\BPE}{\mathrm{BPE}}
\theoremstyle{plain}
\newtheorem{lemma}{Lemma}[section]
\newtheorem*{lemma*}{Lemma}
\newtheorem{theorem}[lemma]{Theorem}
\newtheorem*{corollary*}{Corollary}
\theoremstyle{definition}
\newtheorem{definition}[lemma]{Definition}
\begin{document}
\title{A Note on Unconditional Subexponential-Time Pseudo-deterministic Algorithms for BPP Search Problems}
\author{Dhiraj Holden}
\maketitle
\begin{abstract}
We show the first unconditional pseudo-determinism result for all of search-BPP. Specifically, we show that every BPP search problem can be computed pseudo-deterministically on average for infinitely many input lengths. In other words, for infinitely many input lengths and for any polynomial-time samplable distribution our algorithm succeeds in producing a unique answer (if one exists) with high probability over the distribution and the coins tossed. 
\end{abstract}

\section{Introduction}
Gat and Goldwasser \cite{GG11} introduced the notion of pseudo-deterministic algorithms for search problems; a pseudo-deterministic algorithm $A(x,r)$ has the property that $\Pr_{r_1,r_2}[A(x,r_1) = A(x,r_2) ] \geq 1 - 1/poly(n)$, which means that the algorithm finds the same output with high probability over the randomness. The concept of pseudo-deterministic algorithms has been well-studied, giving pseudo-deterministic algorithms that improve on deterministic algorithms for a wide variety of problems such as finding non-zero evaluations of multi-variate polynomials, finding primitive roots of primes, and finding bipartite matchings in parallel \cite{G15,GGR,matching,GG11,GGH}.

Recently, Oliveira and Santhanam \cite{OS} show that for the general question ``given  length $n$, find a string of length $n$ with a pre-specified  {\it dense property}'', pseudo-deterministic algorithms exist which
for infinitely many lengths $n$  yield  a unique string with said property of length $n$ with high probability. 
An example of such a dense property is the set of prime numbers. 

The major open question is this: does every BPP search problem have a polynomial-time pseudodeterministic algorithm? 
In this work we will show that every BPP search problem has an algorithm for infinitely many input lengths running in subexponential time that for every polynomial-time samplable distribution produces a unique answer with high probability on inputs drawn from the distribution and over the random coins.

This work expands on the work of Oliveira and Santhanam. \cite{OS} give a pseudo-deterministic algorithm for estimating the acceptance probability of a circuit; we extend their work to general search-BPP problems, where the input is a string over some alphabet and our algorithm outputs a string that satisfies a relation with the input. We combine this with the algorithm to generate canonical samples from a polynomial-time samplable distribution given in \cite{OS} to achieve an average-case pseudo-deterministic algorithm for every problem in search-BPP. 

\subsection{Outline of the proof}
The proof of our main theorem uses techniques from \cite{OS} and extends them to the case of general search-BPP problems, in particular the results of \cite{IW01,TV07} on pseudorandomness assuming uniform assumptions. There are two cases to the proof; either PSPACE is contained in subexponential-time BPP, or it is not. In the first case we can use this to get subexponential-size circuit lower bounds for BPE, which extending an argument of \cite{OS} implies a subexponential-time pseudo-deterministic algorithm for every search-BPP problem on infinitely many input lengths. If PSPACE is not contained in subexponential-time BPP, \cite{TV07} gives a construction of a pseudorandom generator which is not distinguishable uniformly for infinitely many input lengths. This implies that in particular, the search-BPP algorithm cannot distinguish the output of the PRG from uniform when fed inputs from any polynomial-time samplable distribution, which we use to construct an algorithm running in subexponential time that for infinitely many input lengths outputs a unique answer with high probability over the input distribution and the randomness of the algorithm when the input is drawn from any polynomial-time samplable distribution. 

\section{Preliminaries}
In this section we will define the notions used in this paper. We will define pseudo-determinism for BPP search problems and the notion of average-case complexity we will use. First, we will define a search problem.

\begin{definition}[Search Problem]
A search problem is a relation $R$ consisting of pairs $(x,y)$. We say that an algorithm $A$ solves a search problem $R$ if $(x,A(x)) \in R$ for all $x$. 
\end{definition}

With this definition, we can now given the definition of search-BPP.

\begin{definition}[Search-BPP]
A binary relation $R$ is in search-BPP if there exists an algorithm $A$ running in probabilistic polynomial time that for every $x$ outputs a $y$ such that $(x,y) \in R$ with probability at least 2/3, and there exists a BPP machine $B$ such that if it accepts on input $(x,y)$, then $(x,y) \in R$ and also $B$ accepts $(x,A(x,r))$ with probability at least 1/2. 
\end{definition}

A pseudo-deterministic search-BPP problem is a search-BPP problem with a \\ pseudo-deterministic algorithm, or an algorithm which outputs the same $y$ with high probability.

\begin{definition}[Pseudo-deterministic search-BPP]
A search-BPP relation $R$ is in \\ pseudo-deterministic search-BPP if there exists an algorithm $A$ such that for every $x$ there exists a $y$ such that $(x,y) \in R$ and $\Pr[A(x,r) = y] \geq 1/2$. 
\end{definition}

To obtain the best running time for our pseudo-deterministic algorithm, we will need the iterated exponential functions first used in complexity theory by \cite{MVW99}. We will be considering functions with half-exponential growth, i.e. functions $f$ such that $f(f(n)) \in O(2^{n^k})$ for some $k$.

\begin{definition}[Fractional exponentials \cite{MVW99}]
The fractional exponential function $e_{\alpha}(x)$ will be defined as $A^{-1}(A(x) + \alpha)$, where $A$ is the solution to the functional equation $A(e^x - 1) = A(x) + 1$. In addition, we can construct such functions so that $e_{\alpha}(e_{\beta}(x)) = e_{\alpha + \beta}(x)$. It is clear from this definition that $e_{1}(n) = O(2^n)$ as desired.  
\end{definition}

In addition, we also need to talk about average-case complexity for search problems. Average-case complexity is defined over a given distribution, though our results will extend to every polynomial-time samplable distribution.

\begin{definition}[Average-case search-BPP]
We say that a search problem given by a relation $R$ is in HeurBPTIME($t(n)$,$\delta(n)$) for a given distribution $\mathcal{D}$ if there is some algorithm $A$ running in time $t(n)$ such that $\Pr_{x \leftarrow D_{|x|}, r}[ (x,A(x,r)) \in R ] \geq 1 - \delta(n)$, and there exists a BPP machine $B$ such that if it accepts on input $(x,y)$, then $(x,y) \in R$ and also $B$ accepts $(x,A(x,r))$ with probability at least 1/2.
\end{definition}

We can define average-case pseudo-deterministic search-BPP in a similar fashion by requiring that $A(x,r)$ be unique with high probability.

\begin{definition}[Average-case pseudo-deterministic search-BPP]
We say that a search problem given by a relation $R$ is in HeurPsdTIME($t(n)$,$\delta(n)$) for a given distribution $\mathcal{D}$ if there is some algorithm $A$ running in time $t(n)$ such that  for every $x$ there exists a $y(x)$ such that $(x,y) \in R$ and $\Pr_{x \leftarrow D_{|x|}, r}[  A(x,r) = y(x) ] \geq 1 - \delta(n)$.
\end{definition}

The final result that we need is a statement of \cite{TV07} about a pseudorandom generator based on hardness for PSPACE. 

\begin{theorem}[Corollary 4.4 of \cite{TV07}]
\label{prg}
For every function $f \in \PSPACE$, there is a constant $d$ such that if $f \notin \cup_{c} \mathrm{BPTIME}(t(n^d)^c)$, then there is a generator $G$ with stretch $t(\cdot)$ that cannot be $1/t(\cdot)^c$-distinguished uniformly in time $t(\cdot)^c$ for any constant $c$. 
\end{theorem}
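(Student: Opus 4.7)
The plan is to follow the Impagliazzo--Wigderson paradigm of uniform hardness-to-randomness, adapted to $\PSPACE$. Concretely, I would fix a $\PSPACE$-complete function $f$ and feed it into a Nisan--Wigderson style generator $G$ with stretch $t$. I would then argue that any uniform distinguisher that breaks $G$ with advantage $1/t^c$ in time $t^c$ can be converted into a uniform randomized algorithm that computes $f$ in time $t(n^d)^c$ for some constant $d$, contradicting the hardness hypothesis.

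To make this work uniformly, I would choose $f$ to be arithmetized TQBF or a similar $\PSPACE$-complete problem that enjoys two crucial self-reducibility properties: \emph{downward self-reducibility}, so the value on length-$n$ inputs reduces in polynomial time to queries on length-$(n-1)$ inputs, and an \emph{instance checker} coming from its low-degree polynomial structure. Together these ingredients enable worst-case-to-average-case reductions to be carried out uniformly, without non-uniform advice. The constant $d$ absorbs the polynomial blowup in input length incurred when the PRG of stretch $t(n)$ is built from the truth table of $f$ on inputs of length about $n^d$.

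The core step is the reconstruction. The classical NW analysis shows that a distinguisher $D$ yields, non-uniformly, a small circuit that computes $f$ correctly on a noticeable fraction of inputs after hard-wiring a short advice string (restrictions of $f$ outside a row of the NW design). I would remove the non-uniformity by enumerating candidate advice strings and using the instance checker to discard wrong candidates on random inputs, at a cost of an $\exp(\text{advice length})$ factor that remains polynomial in $t$. Combined with an inductive use of downward self-reducibility---solving $f$ at length $n$ using the algorithm already obtained for length $n-1$, with random self-reducibility amplifying average-case correctness to worst-case correctness at each level---this yields the desired uniform $\mathrm{BPTIME}$ algorithm for $f$.

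The main obstacle is the \emph{infinitely-often} quantifier in the conclusion. The inductive bootstrapping through lengths can fail at intermediate levels (the instance checker might not find a consistent candidate at some length $n$), so the final uniform algorithm for $f$ only succeeds on the lengths at which the distinguisher is actually usable, which is infinitely many lengths rather than all sufficiently large lengths. A secondary difficulty is balancing the parameters so that the $\exp(\text{advice})$ blowup in the uniform reconstruction, and the iterated polynomial blowup from downward self-reducibility, both stay within the $t(n^d)^c$ budget allowed by the hardness hypothesis; this is precisely what forces the constant $d$ to exist but potentially to be large.
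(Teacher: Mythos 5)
This statement is not proved in the paper at all: it is imported as Corollary 4.4 of \cite{TV07}, so there is no in-paper proof to compare your attempt against. Judged against the actual Trevisan--Vadhan argument (which follows the Impagliazzo--Wigderson uniform hardness-to-randomness paradigm), your outline is essentially the right one: a Nisan--Wigderson-style generator built from a \PSPACE-complete function, with the non-uniform NW reconstruction made uniform by enumerating the short advice, pruning wrong candidates with an instance checker, and bootstrapping through input lengths via downward self-reducibility and random self-correction, with the constant $d$ absorbing the polynomial relation between seed length and the input length of $f$. Two caveats. First, the phrase ``arithmetized TQBF or a similar \PSPACE-complete problem'' glosses over the main technical content of \cite{TV07}: the standard arithmetization of TQBF is \emph{not} simultaneously downward self-reducible and self-correctible as a single low-degree polynomial family (the degree-reduction operators change the polynomial at every quantifier level), and a substantial part of \cite{TV07} is devoted to constructing a \PSPACE-complete sequence of polynomials enjoying both properties; a complete proof must either carry out that construction or cite it explicitly. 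Second, your account of where the infinitely-often quantifier comes from is slightly off: the reconstruction is run in contrapositive form from a distinguisher assumed to succeed on \emph{all sufficiently large} lengths, precisely so that the length-by-length induction never hits a gap; the infinitely-often loss is therefore baked into the security notion (every uniform distinguisher must fail on infinitely many lengths), rather than arising from a bootstrapping that stalls at intermediate levels. These are issues of attribution and precision rather than fatal gaps; the plan matches the proof in the cited source.
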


\section{Results}
In this section we will show that every relation $R$ in search-BPP is contained in \\ HeurPsdTIME($e_{1/2 + \epsilon}(n)$,$\frac{1}{poly(n)}$) for infinitely many input lengths for every polynomial-time samplable distribution. 

\begin{theorem}
Let $R$ be a relation in search-BPP. Then for every polynomial-time samplable distribution $\mathcal{D}$, and for every $\epsilon > 0$, $R$ is contained in HeurPsdTIME($e_{1/2 + \epsilon}(n)$,$\frac{1}{poly(n)}$) for infinitely many input lengths.
\end{theorem}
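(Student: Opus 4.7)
The plan is to follow the dichotomy strategy of \cite{OS}, extended from estimating acceptance probabilities to finding a canonical answer for any search-$\BPP$ relation. Fix $t(n) = e_{1/2+\epsilon/2}(n)$ and the constant $d$ supplied by Theorem~\ref{prg} applied to a $\PSPACE$-complete function $f$, and split on whether $f \in \bigcup_c \mathrm{BPTIME}(t(n^d)^c)$.

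In Case~1 (the inclusion holds), I would extend the \cite{OS} analysis, using Karp-Lipton for $\BPP$ together with the downward self-reducibility of $\PSPACE$, to obtain a language in $\BPE$ requiring circuits of size $t(n)$ infinitely often. Feeding this into the Impagliazzo-Wigderson hardness-vs-randomness construction of \cite{IW01} yields a PRG, computable in time $e_{1/2+\epsilon}(n)$, that fools all polynomial-size circuits. This derandomizes every $\BPP$ algorithm, and in particular the search-$\BPP$ algorithm $A$, to deterministic $e_{1/2+\epsilon}(n)$-time infinitely often; a deterministic algorithm is trivially pseudo-deterministic, and the result here even holds worst-case, not merely on samplable distributions.

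In Case~2 (the inclusion fails), Theorem~\ref{prg} provides a generator $G$ of stretch $t$ that no uniform $t^c$-time procedure can $1/t^c$-distinguish from uniform, infinitely often. I would define the algorithm $\mathcal{A}$ as follows: on input $x$ of length $n$, enumerate every seed $s$ of $G$, form the candidate set $\{y_s := A(x, G(s))\}$, and for each distinct candidate $y$ estimate $\hat{p}(x, y) := \Pr_{s'}[B(x, y, G(s')) = 1]$ by further enumeration over seeds $s'$; output the lexicographically smallest $y$ with $\hat{p}(x, y) \geq 3/5$. Since $\mathcal{A}$ is deterministic in $x$, pseudo-determinism is immediate, and the total running time $2^{|s|} \cdot \mathrm{poly}(n)$ is subexponential by the choice of stretch.

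Average-case correctness is the substantive step. Consider the uniform polynomial-time procedure $D(\rho)$ that draws $x \sim \mathcal{D}$ via its polynomial-time sampler, computes $y = A(x, \rho)$, and returns $B(x, y, \rho')$ on fresh randomness $\rho'$; when $\rho, \rho'$ are truly uniform, $D$ accepts with probability at least $1 - 1/\mathrm{poly}(n)$. Applying the indistinguishability guarantee of $G$ to $D$, and to analogues that fix each possible target $y$, forces $\hat{p}(x, y)$ to track the true probability on all but a $1/\mathrm{poly}(n)$ fraction of $x \sim \mathcal{D}$, so $\mathcal{A}$ outputs a valid answer on that same measure of inputs, infinitely often. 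The main obstacle will be parameter calibration via the composition identity $e_\alpha \circ e_\beta = e_{\alpha+\beta}$: the threshold $t$ must be chosen so that Case~1's derandomization lands within $e_{1/2+\epsilon}(n)$ time while Case~2's enumeration pays only $2^{\mathrm{seed}(n)} \leq e_{1/2+\epsilon}(n)$, and the hardness constant $c$ in the PRG must be taken above the combined polynomial degrees of $A$, $B$, and $\mathcal{D}$'s sampler so that the distinguishing guarantee applies to the full verification pipeline.
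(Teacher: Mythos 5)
Your overall strategy---the dichotomy on whether a $\PSPACE$-complete function lies in half-exponential $\mathrm{BPTIME}$, circuit lower bounds for $\BPE$ plus hardness-vs-randomness in the first case, and the uniform generator of Theorem~\ref{prg} combined with the polynomial-time sampler for $\mathcal{D}$ in the second---is the same as the paper's. But two steps as you have written them do not go through. First, in Case~1 the hard function you obtain lives in $\BPE$, not in deterministic exponential time, so the Impagliazzo--Wigderson generator built from its truth table is itself computable only by a randomized procedure; the resulting simulation of the search algorithm is therefore pseudo-deterministic (the truth table, hence the generator's output, hence the answer, is the same with high probability over the coins), not deterministic as you claim, and the assertion that Case~1 yields a worst-case deterministic algorithm is an overstatement. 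This does not damage the conclusion, since pseudo-determinism is all that is needed, but the claim should be corrected.

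Second, and more substantively, in Case~2 your correctness argument appeals to ``analogues that fix each possible target $y$'' in order to argue that the empirical estimate $\hat p(x,y)$ over seeds of $G$ tracks the true acceptance probability of $B(x,y,\cdot)$. A test with a candidate $y$ hardwired is a non-uniform distinguisher, and Theorem~\ref{prg} guarantees security only against \emph{uniform} distinguishers; this is exactly the gap that the uniform-hardness setting does not let you wave away. The paper avoids the issue by choosing a simpler output rule: amplify $B$, feed the generator's outputs to $A$, and output the first candidate that $B$ accepts. Then the only distinguisher ever needed is the fully uniform pipeline ``sample $x\sim\mathcal{D}$, compute $A(x,\rho)$, run $B$,'' whose acceptance probability under truly uniform $\rho$ is a constant by the correctness of $A$ and $B$, and whose acceptance probability under $\rho\leftarrow G$ would be too small if the algorithm failed on a noticeable fraction of $x\sim\mathcal{D}$. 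If you wish to keep your threshold/lexicographic-minimum rule, you must replace the fixed-$y$ tests by a uniform procedure (for instance, one that samples $x$, enumerates seeds to generate the candidate list itself, and selects a candidate by index), and then verify that the resulting loss in distinguishing advantage---a factor on the order of the number of candidates, roughly $e_{1/2-\epsilon}(n)$---still clears the $1/t(\cdot)^{c}$ threshold of Theorem~\ref{prg}.
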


\begin{proof}
There are two cases to the proof. Suppose that $\PSPACE \subseteq \mathrm{BPTIME}(e_{1/2 + \epsilon}(n))$. Then, padding implies that $\mathrm{SPACE}(e_{1/2 - \epsilon}(n)) \subseteq \BPE$, which in turn implies $\BPE$ cannot be approximated by $e_{1/2 - \epsilon}(n)$-size circuits. \cite{OS} notes that this fact implies that any search-BPP relation can be solved pseudo-deterministically in time $O(e_{1/2 + \epsilon}(n))$ for infinitely many input lengths. Now, suppose that $\PSPACE \nsubseteq \mathrm{BPTIME}(e_{1/2 + \epsilon}(n))$. Then, by Theorem \ref{prg}, this means that there is a generator $G$ with stretch $e_{1/2 + \epsilon}(\cdot)$ that cannot be $1/e_{1/2 + \epsilon}(\cdot)^c$-distinguished in time $e_{1/2 + \epsilon}(\cdot)$. Then, we claim that the following algorithm is a HeurPsdTIME($e_{1/2 + \epsilon}(n)$,$\frac{1}{poly(n)}$) algorithm for infinitely many input lengths. Let us call the search algorithm of $R$ $A$ and the verification algorithm of $R$ $B$. We will use the version of $B$ amplified to a $1 - exp(n)$ success probability. Iterate through the $e_{1/2 - \epsilon}(\cdot)$ outputs of $G$ as the randomness, and output the first $A(x,r)$ such that $B$ accepts $(x,A(x,r))$. Suppose that this algorithm is not a HeurPsdTIME($e_{1/2 + \epsilon}(n)$,$\frac{1}{poly(n)}$) algorithm for infinitely many input lengths. Then, this means that this algorithm fails for large enough input lengths, which means that the output of this algorithm is not correct or not unique with high probability over the distribution and the randomness. Since $B$ accepts all $(x,y) \in R$ with high probability, this means that with high probability the first $A(x,r)$ such that $B$ accepts $(x,A(x,r))$ is also the first $A(x,r)$ such that $(x,A(x,r)) \in R$. Now, it suffices to show that there exists such an $A(x,r)$. If no such $A(x,r)$ exists, then $\Pr_{r \leftarrow G}[ B((x,A(x,r)) = 1] \leq exp(-n)$, and $\Pr_{r \leftarrow U}[ B(x,A(x,r)) = 1 ]\geq c$ for some constant $c$ by the correctness guarantees of $A$ and $B$. Thus, if no such $A(x,r)$ exists, $B(x,A(x,r))$ with $x$ drawn from some polynomial-time samplable distribution is a uniform algorithm which distinguishes between the output of $G$ and the uniform distribution, contradicting the fact that $G$ is indistinguishable. Thus the algorithm is a HeurPsdTIME($e_{1/2 + \epsilon}(n)$,$\frac{1}{poly(n)}$) for infinitely many input lengths. 
\end{proof}

\section{Open Problems}

It is open whether this result can be improved further by showing a faster algorithm, making the algorithm work more than just infinitely often, or by giving a worst-case rather than an average-case algorithm. We believe that this would require new techniques in the theory of pseudorandomness. In addition, it is open whether pseudo-deterministic simulation of general search-BPP problems implies circuit lower bounds, or if a theorem similar to that of \cite{KI04} showing that derandomization implies either circuit lower bounds for NEXP or arithmetic circuit lower bounds for computing the permanent can be proven. If so, this will require substantially different techniques than \cite{KI04}, as finding a non-zero value of a polynomial is known to be in pseudo-deterministic polynomial time. 

\section*{Acknowledgements}

I would like to thank Shafi Goldwasser, Ofer Grossman, and Rahul Santhanam for valuable comments on this work. This work was supported by NSF MACS - CNS-1413920 and the
SIMONS Investigator award Agreement Dated 6-5-12.

\bibliographystyle{plain}
\bibliography{bibfile}
\end{document}